\def\endfor{\end{ALC@for}}
\def\endif{\end{ALC@if}}
\def\endwhile{\end{ALC@while}}
\newcommand{\ed}{$\hfill\diamond$}
\newcommand{\ued}{\begin{flushright}\raisebox{6ex}[1ex][0ex]{{\smash
				  \ed}}\end{flushright}\vspace{-6ex}}
\newcommand{\figcaption}[1]{\def\@captype{figure}\caption{#1}}
\newcommand{\tblcaption}[1]{\def\@captype{table}\caption{#1}}
\def\CA{{\cal A}}
\def\CB{{\cal B}}
\def\CC{{\cal C}}
\def\CD{{\cal D}}
\def\CE{{\cal E}}
\def\CF{{\cal F}}
\def\diag{\mathrm{diag}}
\def\dim{\mathrm{dim}}
\def\rank{\mathrm{rank}}
\def\nullity{\mathrm{null}}
\def\inv{\mathrm{in}}
\def\out{\mathrm{out}}
  \def\@thmcountersep{.}
\renewcommand{\sec}[1]{Section \ref{sec:#1}}
\begin{document}
\title{Graph Spectral Properties of \\Deterministic Finite Automata}
\author{Ryoma Sin'ya}
\institute{Department of Mathematical and Computing Sciences,
Tokyo Institute of Technology.\\\email{shinya.r.aa@m.titech.ac.jp}}

\maketitle
\begin{abstract}
We prove that a minimal automaton has a minimal adjacency matrix rank
and a minimal adjacency matrix nullity using equitable partition (from
graph spectra theory) and Nerode partition (from automata theory). This
result naturally introduces the notion of matrix rank into a regular
language $L$, the minimal adjacency matrix rank of a deterministic
automaton that recognises $L$.

We then define and focus on rank-one languages: the class of languages
for which the rank of minimal automaton is one. We also define the expanded
canonical automaton of a rank-one language.

%\keywords{regular languages, automata, spectra of graphs, partition}
\end{abstract}

\section{Introduction}\label{sec:introduction}
The {\it counting function}\footnote{Also called as {\it growth
function}, {\it generating function} or {\it combinatorial function}} $C_L:
\mathbb{N} \rightarrow \mathbb{N}$ of a language $L$ over a finite
alphabet maps a natural number $n$ into the number of words in $L$ of length $n$
defined as:
\[
 C_L(n) := |\{ w \in L \mid |w| = n \}|.
\]
The counting function is a fundamental object in formal language theory
and has been studied extensively ({\it cf.}
\cite{Shur:2008:CCR:1813695.1813728,journals/corr/abs-1010-5456})．
If $L$ is a regular language, we can represent its counting function $C_L(n)$ using
the $n$-th power of an adjacency matrix of a deterministic automaton
that recognises $L$.
Our interest is in the ``{\it easily countable}'' class of languages, in
the intuitive sense of the word.
In this paper, we define and focus on \emph{rank-one languages}: the class of
languages that can be recognised by a deterministic automaton for which
the adjacency matrix rank is one.\\

\noindent{\bf Counting and its applications}
For any regular language
$L$, it is a well-known result that the counting function of $L$
satisfies:
\begin{eqnarray}
  C_L(n) = I M^n F \label{eq:counting}
\end{eqnarray}
 where $M$ is an adjacency matrix, $I$ is an initial vector and $F$ is a
 final vector of any deterministic automaton recognises $L$, since
 $(M^n)_{ij}$ equals to the number of paths of length $n$ from $i$ to $j$
 and this corresponds the number of words  of length $n$ ({\it cf.}
 Lemma 1 in \cite{RyomaSIN'YA2013}).
 We give the simple example of Equation \eqref{eq:counting} as follows.

\begin{example}\label{ex:counting}
Let $\CA_{fib}$ is a deterministic automaton recognises $L = (a+ba)^*$
 and
\[
 M = \begin{bmatrix}
	  1 & 1\\
	  1 & 0
	 \end{bmatrix}, \;\;\;\;\;\;\;\;\;\;\;\;
 I = \begin{bmatrix}
	  1 & 0
	 \end{bmatrix}, \;\;\;\;\;\;\;\;\;\;\;\;
 F = \begin{bmatrix}
	  1\\
	  0
	 \end{bmatrix}
\]
 are its adjacency matrix, initial and final vector. Then the following holds.

\begin{minipage}{.6\columnwidth}
\begin{eqnarray}
 C_L(0) &=& |\{ \varepsilon \}| = 1, \;\;\; C_L(1) = |\{ a \}| = 1,\nonumber\\
 C_L(2) &=& |\{ aa, ba \}| = 2,\nonumber\\
 &\vdots&\nonumber\\
 C_L(n) &=& I M^n F = \begin{bmatrix}
					 1 & 0
					\end{bmatrix}
					\begin{bmatrix}
					 1 & 1\\
					 1 & 0
					\end{bmatrix}^n
					\begin{bmatrix}
					 1\\
					 0
					\end{bmatrix}\label{eq:fib}
\end{eqnarray}
\end{minipage}
\begin{minipage}{.3\columnwidth}
 \centering\includegraphics[width=1\columnwidth]{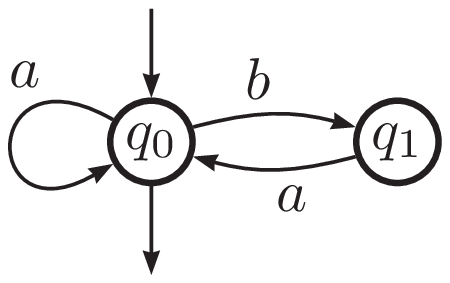}\\
 \large{\centering{$\CA_{fib}$}}
\end{minipage}\vspace{1em}\\
\noindent Equation \eqref{eq:fib} means that $C_L(n)$ equals to the
 $(n+1)$-th Fibonacci number.
\ed 
\end{example}

{\it Ranking} is one of the variants of counting.
The ranking function of $L$ over a finite alphabet $A$ is a bijective
function $R_L: L \rightarrow \mathbb{N}$ that maps a word $w$ in $L$ to its
index in the lexicographic ordering $\prec$ over $A^*$ defined as:
\[
 R_L(w) := |\{ v \in L \mid v \prec w \}|.
\]
In 1985, Goldberg and Sipser introduced a ranking-based string
compression in \cite{Goldberg:1985:CR:22145.22194}.
Recently, the author studied a ranking-based compression on a regular
language to analyse its compression ratio and improve a
ranking algorithm in \cite{RyomaSIN'YA2013}.
We show an example of a ranking-based compression on a regular language.

\begin{example}\label{ex:url}
The formal grammar of \emph{Uniform Resource Identifier} (URI) is
 defined in RFC 3986 \cite{bernerslee2005uri}, and it is known
 that the formal grammar of URI is regular ({\it cf.} \cite{RANS}).
 Because the language of all URIs $U$ is regular, we can apply a
 ranking-based compression on a regular language.
 For example, the index of the URI $w_1 =$
 \url{http://dlt2014.sciencesconf.org/} is:
\[
 R_U(w_1) = 728552296504796066382113700758455910393907656035063493.
\]
The word $w_1$ is 32 bytes ($|w_1| = 32$), whereas its
 index $R_U(w_1)$ is 23 bytes ($\lfloor \log_{256} R_U(w_1) \rfloor = 23$).
 $w_1$ is compressed up to $72\%$ and, clearly, we can decompress
 it by the inverse of $R_U$ since ranking is bijective. \ed
\end{example}
In the case of a ranking on a regular language, the ranking function and
its inverse (\emph{unranking}) of $L$ can be calculated using the
adjacency matrix of the deterministic automaton $L$ ({\it cf.}
\cite{journals/mst/ChoffrutG95a,Berth:2010:CAN:1941063,RyomaSIN'YA2013}). Indeed,
Example \ref{ex:url} uses RANS\cite{RANS}, which is open source
software implemented by the author based on the algorithms in
\cite{RyomaSIN'YA2013}.

The computational complexity of an unranking function is higher than
a ranking function because the former requires matrix multiplication
but the latter does not ({\it cf.} Table 1 in \cite{RyomaSIN'YA2013}).
In Example \ref{ex:url}, the calculation of ranking
(compression) was performed in less than one second; however, the
calculation of unranking (decompression) took about two minutes.
The reason for such results is that the cost of matrix multiplication
is high (the naive algorithm has cubic complexity), and the unranking
algorithm requires matrix multiplications, while the ranking algorithm
does not.
The minimal automaton of $U$ used in Example \ref{ex:url} has
180 states, and its adjacency matrix multiplication cost is high in
practice. \\

\noindent{\bf Rank-one languages and our results}
There are several classes of matrices that have a matrix power that can be 
computed efficiently ({\it e.g.} diagonalisable matrices and low-rank
matrices).
We focus on \emph{rank-one matrices} from these classes. As we describe in
\sec{rankone}, the power of a rank-one matrix has constant time
complexity with linear-time preprocessing. 
We investigate \emph{rank-one languages}: the class of languages for
which the rank of minimal automaton is one. We define an automaton as
\emph{rank-$n$} if its adjacency matrix is rank-$n$.
Next, we introduce the definition of the rank of a language.
\begin{definition}\upshape\label{def:rank-n}
 A regular language $L$ is {\it rank-$n$} if there exists a rank-$n$
 deterministic automaton that recognises $L$, and there does not exist a
 rank-$m$ deterministic automaton that recognises $L$ for any $m$ less
 than $n$.
\ed
\end{definition}
However, Definition \ref{def:rank-n} raises the question of how to
find a minimal rank.
It is a classical theorem in automata theory that for any regular
language $L$, there is a unique automaton $\CA$ that recognises $L$ that
has a minimal number of states, and $\CA$ is called the \emph{minimal
automaton} of $L$. We intend to refine Definition \ref{def:rank-n} as
the following definition.

\setcounter{definition}{0}
\begin{definition}[refined]\upshape
 A regular language $L$ is {\it rank-$n$} if its minimal automaton is
 rank-$n$.
\ed
\end{definition}
Nevertheless, to achieve this we have to show that a minimal automaton has the
minimal rank for consistency of the above two definitions. Hence in
\sec{min}, we prove the following theorem, which has a more general
statement.

\begin{theorem}\label{theorem:min}
An automaton $\CA$ is minimal if and only if both the rank
 and the nullity of its adjacency matrix are minimal.
\ed
\end{theorem}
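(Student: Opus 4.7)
The proof splits into two directions, and the plan is to handle the non-trivial direction (minimality implies minimum rank and nullity) through the bridge between Nerode equivalence and equitable partitions, then derive the converse from the rank--nullity theorem.

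For the forward direction I would take an arbitrary DFA $\CB$ recognising $L$, with adjacency matrix $M_\CB$, and let $\CA$ denote the minimal DFA, obtained as the quotient of $\CB$ by the Nerode equivalence, with adjacency matrix $M_\CA$. The first step is to check that the Nerode partition on the states of $\CB$ is an equitable partition of its underlying adjacency multigraph: if $p$ and $q$ are Nerode-equivalent then for every letter $a$ the successors $\delta_\CB(p,a)$ and $\delta_\CB(q,a)$ lie in a common class, so the number of edges from $p$ and from $q$ into any fixed class coincide. Letting $P$ be the characteristic matrix of this partition (one column per class, with a $1$ in each row indexed by a member of that class) and computing entrywise, the equitable property translates into the intertwining identity $M_\CB P = P M_\CA$.

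From this identity, both inequalities follow directly. Since $P$ has full column rank, the map $v \mapsto Pv$ is injective; combined with the implication $M_\CA v = 0 \Rightarrow M_\CB(Pv) = 0$, this embeds $\ker M_\CA$ into $\ker M_\CB$, giving $\nullity(M_\CA) \leq \nullity(M_\CB)$. Dually, $\rank(M_\CA) = \rank(P M_\CA) = \rank(M_\CB P) \leq \rank(M_\CB)$. Hence the minimal automaton simultaneously attains the minima of both invariants among all DFAs for $L$.

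For the converse, if $\CA$ has $n$ states then the rank--nullity theorem gives $\rank(M_\CA) + \nullity(M_\CA) = n$. Together with the forward direction this shows that the minimum rank plus the minimum nullity equals the minimum state count, so any automaton achieving both minima simultaneously has the minimum number of states and is therefore minimal. The main obstacle will be the equitable-partition verification: one has to check that the Nerode condition, which is stated letter by letter, really does yield an equitable partition for the (letter-summed) adjacency matrix used in counting, and that the resulting quotient matrix agrees with the adjacency matrix of the minimal DFA itself. Once this bridge between the automata-theoretic and graph-theoretic conditions is made precise, the remainder of the argument is standard.
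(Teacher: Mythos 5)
Your proposal is correct and follows essentially the same route as the paper: you show the Nerode partition is equitable, use the intertwining identity $M_{\mathcal{B}}S = SM_{\mathcal{A}}$ with the full-column-rank characteristic matrix to transfer kernel vectors and bound the rank, and dispatch the converse with the rank--nullity theorem exactly as the paper does. The only (harmless) difference is cosmetic: your rank inequality via $\rank(M_{\mathcal{A}})=\rank(SM_{\mathcal{A}})=\rank(M_{\mathcal{B}}S)\le\rank(M_{\mathcal{B}})$ is a slightly cleaner phrasing than the paper's argument with linearly independent non-kernel vectors.
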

Theorem \ref{theorem:min} provides a necessary and sufficient condition for
the minimality of an automaton and is a purely algebraic characterisation of
minimal automata.
 This theorem is not obvious because, in general, for an automaton
 $\CA$, the number of states of $\CA$ and the rank (nullity) of $\CA$
 are not related. This is illustrated in Figure \ref{fig:threedfa},
 where the deterministic automaton $\CB_1$ has three states and its
 rank is two, whereas $\CC_1$ has four states and its rank is one, which
 equals the rank of the minimal automaton $\CA_1$. Therefore, we can-not
 argue naively that ``any minimal automaton has the minimal rank
 (nullity)'' by its minimality of states.

\begin{figure}[h!]
\begin{minipage}[t]{0.31\hsize}
\centering\includegraphics[width=1\columnwidth]{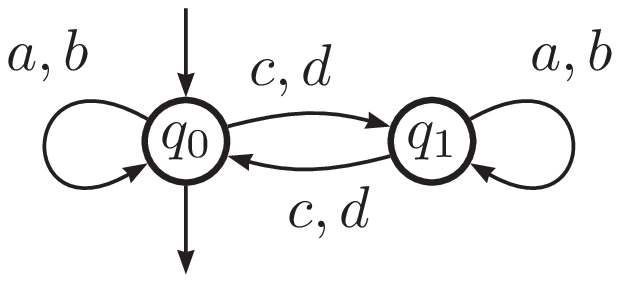}
{\large \centering{$\CA_1$}}
\end{minipage}
\begin{minipage}[t]{0.35\hsize}
\centering\includegraphics[width=.93\columnwidth]{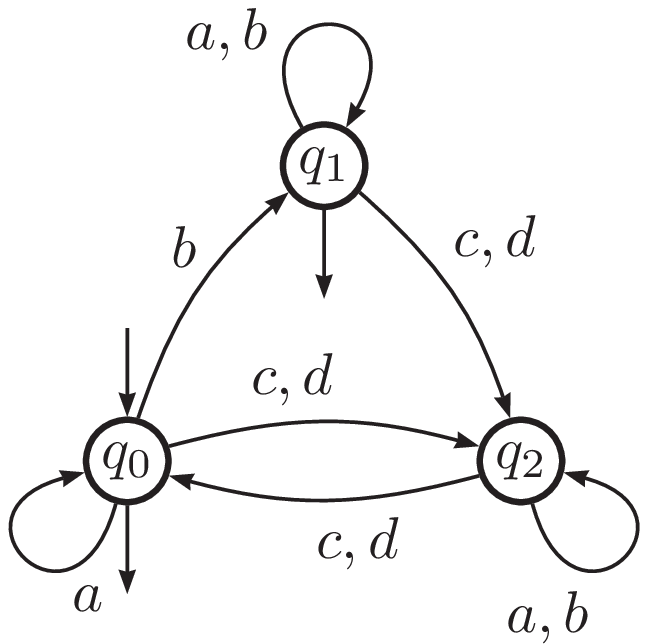}
{\large \centering{$\CB_1$}}
\end{minipage}
\begin{minipage}[t]{0.32\hsize}
\centering\includegraphics[width=1\columnwidth]{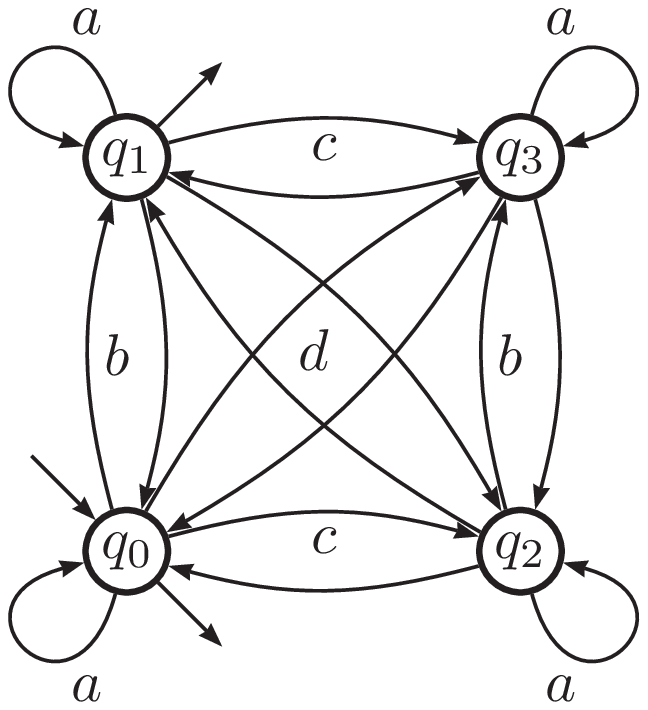}
{\large \centering{$\CC_1$}}
\end{minipage}
\[
 M(\CA_1) = \begin{bmatrix}
  2 & 2\\
  2 & 2
 \end{bmatrix}
 \;\;\;\;\;\;\;\;\;\;\;\;\;\;
 M(\CB_1) = \begin{bmatrix}
  1 & 1 & 2\\
  0 & 2 & 2\\
  2 & 0 & 2
 \end{bmatrix}
 \;\;\;\;\;\;\;\;\;\;\;\;\;\;\;\;\;\;\;
 M(\CC_1) = \begin{bmatrix}
  1 & 1 & 1 & 1\\
  1 & 1 & 1 & 1\\
  1 & 1 & 1 & 1\\
  1 & 1 & 1 & 1
 \end{bmatrix}
\]
 \caption{Three equivalent deterministic trim automata and these adjacency matrices.}
 \label{fig:threedfa}
\end{figure}

The proof consists of the use of two fundamental tools: \emph{equitable
 partition} from graph spectra theory and \emph{Nerode partition} from
 automata theory.
 We briefly introduce these two partitions in \sec{preliminaries}, then
 give the proof of Theorem \ref{theorem:min} in \sec{min}.
 In \sec{rankone}, we investigate the properties of rank-one languages
 and introduce expanded canonical automata.
 In \sec{discussion} we briefly discuss three topics that are not yet
 understood or lack maturity.
\section{Nerode partition and equitable partition}\label{sec:preliminaries}
We assume that the reader has a basic knowledge of automata, graphs and
linear algebra. 
All results in this section are well-known, and for more details, we
refer the reader to
\cite{Sakarovitch:2009:EAT:1629683} for automata theory and
\cite{Mieghem:2011:GSC:1983675} for graph spectra theory.

\subsection{Automata and languages}
A \emph{deterministic finite automaton} $\CA$ is a quintuple $\CA = \langle Q, A,
\delta, q_0, F \rangle$; the finite set of \emph{states} $Q$, the finite set $A$ called \emph{alphabet}, the \emph{transition function} $\delta: Q \times A \rightarrow
Q$, the \emph{initial state} $q_0$, and the set of \emph{final states}
$F \subseteq Q$. 
If $\delta(q, w) = p$ is a transition of automaton $\CA$, $w$ is said
to be the \emph{label} of the transition.
We call a transition $\delta(p, w) = q$ is \emph{successful} if its
destination is in the final states $F$ of $\CA$.
A word $w$ in $A^*$ is \emph{accepted} by $\CA$ if it is the label of a
successful transition from the initial state of $\CA$: $\delta(q_0, w) \in F$.
The symbol like $|\CA|$ denotes the number of states $|Q|$ for an automaton
$\CA$. The set of all acceptable words of $\CA$, or {\it language} of $\CA$,
is denoted by $L(\CA)$. We call two automata are {\it equivalent} if
their languages are identical.
An deterministic automaton $\CA = \langle Q, A, \delta, q_0, F \rangle$
is \emph{trim} if, for all state $q \in Q$, there exist two words $v$
and $w$ such that $\delta(q_0, v) = q$ (\emph{accessible}) and $\delta(q, w) \in
F$ (\emph{co-accessible}).

\subsection{Graphs and adjacency matrices}
A \emph{multidigraph} $G$ is a pair $G = \langle N, E \rangle$; the set
of \emph{nodes} $N$, the multiset of \emph{edges} $E$.
The \emph{adjacency matrix} $M(G)$ of $G$ is the $|N|$-dimensional
matrix defined as:
\[
 M(G)_{ij} := \text{the number of edges from node} \; i \; \text{to node} \; j.
\]
The \emph{spectrum} of a matrix $M$ is the multiset of the eigenvalues
of $M$ and is denoted by $\lambda(M)$.
The \emph{kernel} of a matrix $M$ is the subspace defined as $\{ \bm{v}
\mid M \bm{v} = \bm{0} \}$, and is denoted $\ker(M)$.
 We denote the dimension, rank and nullity (the dimension of the kernel)
 of $M$ by $\dim(M), \rank(M)$ and $\nullity(M)$ respectively.
The following dimension formula is known as the {\it rank-nullity theorem}:
\begin{eqnarray*}
\dim(M) = \rank(M) + \nullity(M).
\end{eqnarray*}

A \emph{partition} $\pi$ of a multidigraph $G = \langle N, E \rangle$ is a set of
nodal sets $\pi = \{ C_1, C_2, \ldots, C_k\}$ that satisfies the
following three conditions:
\begin{eqnarray*}
\emptyset \notin \pi \;\;\; \text{and} \;\;\;
 \displaystyle \bigcup_{C\in \pi} C = N \;\;\; \text{and} \;\;\;
\forall i, j \in [1, k], i \neq j \Rightarrow C_i \cap C_j = \emptyset.
\end{eqnarray*}
We call ${}_\pi M$ as the \emph{partitioned matrix} induced by $\pi$ of
$M$, that is partitioned as 
\begin{eqnarray}
 {}_\pi M = \begin{bmatrix}
			 M_{1,1} && \cdots && M_{1, k}\\
			 \vdots  &&        && \vdots\\
			 M_{k,1} && \cdots && M_{k, k}
			\end{bmatrix}\label{eq:partitionedmatrix}
\end{eqnarray}
where the block matrix $M_{i,j}$ is the submatrix of $M$ formed by the
rows in $C_i$ and the columns in $C_j$.
The \emph{characteristic matrix} $S$ of $\pi$ is the $|N| \times k$
matrix that is defined as follows:
\[
 S_{ij} = \begin{cases}
		   1 & \text{if} \; i \in C_j,\\
		   0 & \text{otherwise}.
		  \end{cases}
\]
In general, $S$ is a full rank matrix ($\rank(S) = \min(|N|, k) = k$) and $S^T S =
\diag(|C_1|, |C_2|, \ldots, |C_k|)$ where $S^T$ is the transpose of $S$.
The \emph{quotient matrix} $M^\pi$ of $M$ by $\pi$ is defined as the $k
\times k$ matrix:
\begin{eqnarray}
 M^\pi = (S^T S)^{-1} S^T M S \label{eq:quotient}
\end{eqnarray}
where $(S^T S)^{-1} = \diag(\frac{1}{|C_1|}, \frac{1}{|C_2|}, \ldots,
\frac{1}{|C_k|})$. That is, $(M^\pi)_{ij}$ denotes the average row sum
of the block matrix $({}_\pi M)_{i,j}$, in the intuitive sense of the
word.

\begin{example}\label{ex:quotient}
 Consider the deterministic automaton $\CB_1$ in Figure \ref{fig:threedfa}.
 Let $\pi_1$ be the partition of $\CB_1$: $\pi_1 = \{\{ q_0, q_1 \},
 \{q_2\}\}$, then its characteristic matrix $S$ and $S^T S$, the partitioned matrix
 ${}_{\pi_1} M(\CB_1)$ and the quotient matrix $M(\CB_1)^{\pi_1}$ are
 follows:
 \[
 S = \begin{bmatrix}
	  1 & 0 \\
	  1 & 0 \\
	  0 & 1 \\
	 \end{bmatrix},\;\;\;
 S^T S = \begin{bmatrix}
		  2 & 0\\
		  0 & 1
		 \end{bmatrix},\;\;\;
 {}_{\pi_1} M(\CB_1) =
 \begin{bmatrix}
  \begin{bmatrix}\vspace{-.2em}
   1 & 1\vspace{-.2em}\\
   0 & 2\vspace{-.2em}
  \end{bmatrix} &
  \begin{bmatrix}\vspace{-.2em}
   2 \vspace{-.2em}\\
   2\vspace{-.2em}
  \end{bmatrix}\\[.5em]
  \begin{bmatrix}
   2 & 0\vspace{-.2em}\\
  \end{bmatrix} &
  \begin{bmatrix}\vspace{-.2em}
   2
  \end{bmatrix}
 \end{bmatrix},
 \;\;\;
M(\CB_1)^{\pi_1} =
 \begin{bmatrix}
  2 & 2\\
  2 & 2
 \end{bmatrix}.
\]\ued \vspace{1em}
\end{example}

\subsection{Nerode partition}\label{sec:nerode}
Because an automaton $\CA$ can be regarded as a multidigraph, we can
naturally define the adjacency matrix, partitions and these quotient of
$\CA$ as the same manner.
 Let $q$ be a state of $\CA$. We denote by $F(q)$ the set of words $w$ that
 are labels of a successful transition starting from $q$. It is called the future of the
 state $q$. Two states $p$ and $q$ are said to be \emph{Nerode equivalent} if and only if
 $F(p) = F(q)$. \emph{Nerode partition} is the partition induced by Nerode equivalence.

Nerode's theorem states that states of minimal automaton are
blocks of Nerode partition, edges and terminal states are defined
accordingly ({\it cf.} \cite{nerode58,IGMA_BeaCro08}). That is, note
that the adjacency matrix of a minimal automaton equals the quotient
matrix of the adjacency matrix of an equivalent automaton by its Nerode
partition. For example, $\pi_1$ in Example
\ref{ex:quotient} is the Nerode partition of $\CB_1$ in Figure
\ref{fig:threedfa} and its induced quotient matrix $M(\CB_1)^{\pi_1}$ is
identical to the adjacency matrix  $M(\CA_1)$ of the minimal automaton
$\CA_1$ in the same figure.

\subsection{Equitable partition}\label{sec:equitalbe}
If the row sum of each block matrix $M_{i,j}$ in Equation
\eqref{eq:partitionedmatrix} induced by $\pi$ is constant, then the
partition $\pi$ is called \emph{equitable}.
 In that case the characteristic matrix $S$ of $\pi$ satisfies the
 following equation ({\it cf.} Article 15 in \cite{Mieghem:2011:GSC:1983675}):
\begin{eqnarray}
  M S = SM^\pi\label{eq:equitable}
\end{eqnarray}
If $\bm{v}$ is an eigenvector of $M^\pi$ belonging to the eigenvalue
$\lambda$, then $S \bm{v}$ is an eigenvector of $M$ belonging to the
same eigenvalue $\lambda$. Indeed, left-multiplication of the eigenvalue
equation $M^\pi \bm{v} = \lambda \bm{v}$ by $S$ yields:
\[
 \lambda S \bm{v} = (SM^\pi) \bm{v} = (MS) \bm{v} = M(S\bm{v}).
\]
For example, we can verify that $\pi_1$ in Example \ref{ex:quotient} is
equitable.
We conclude the the following lemma.
\begin{lemma}\label{lemma:equitable}
 Let $\pi$ be an equitable partition of a matrix $M$ and $M^\pi$ be its
 induced quotient matrix, then $\lambda(M^\pi) \subseteq \lambda(M)$
 holds. \qed
\end{lemma}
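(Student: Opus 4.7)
The plan is to exploit the intertwining relation $MS = SM^\pi$ already derived from equitability (Equation~\eqref{eq:equitable}) and lift each eigenpair of $M^\pi$ to an eigenpair of $M$ through multiplication by the characteristic matrix $S$. First I would take an arbitrary eigenvalue $\lambda \in \lambda(M^\pi)$ with eigenvector $\bm{v}$, and compute
\[
M(S\bm{v}) = (MS)\bm{v} = (SM^\pi)\bm{v} = S(\lambda \bm{v}) = \lambda(S\bm{v}),
\]
so $S\bm{v}$ is a candidate eigenvector of $M$ for the same eigenvalue $\lambda$.

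The one genuine thing to verify is that $S\bm{v}$ is nonzero, because an eigenvector must be nonzero for $\lambda$ to be counted in the spectrum. Here the remark from \sec{preliminaries} is crucial: the characteristic matrix $S$ is full column rank ($\rank(S) = k$), hence the linear map $\bm{v} \mapsto S\bm{v}$ is injective, so $\bm{v} \neq \bm{0}$ forces $S\bm{v} \neq \bm{0}$. This already proves that every eigenvalue of $M^\pi$ is an eigenvalue of $M$ in the set-theoretic sense.

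Since the statement asserts multiset inclusion, I would then upgrade the argument to account for multiplicities. If $\lambda$ occurs with (algebraic, or at least geometric) multiplicity $m$ in $M^\pi$, pick linearly independent eigenvectors $\bm{v}_1,\ldots,\bm{v}_m$; by the injectivity of $S$ just used, the images $S\bm{v}_1,\ldots,S\bm{v}_m$ are linearly independent eigenvectors of $M$ for $\lambda$, so the multiplicity of $\lambda$ in $\lambda(M)$ is at least $m$. Running this over all eigenvalues of $M^\pi$ yields $\lambda(M^\pi) \subseteq \lambda(M)$ as multisets. There is no real obstacle in this proof; the only point that requires care is the nontriviality of $S\bm{v}$, which is why the full-rank property of $S$ was recorded in the preliminaries.
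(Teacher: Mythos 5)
Your proof is correct and follows the same route as the paper: the paper establishes the lemma by left-multiplying the eigenvalue equation $M^\pi\bm{v} = \lambda\bm{v}$ by $S$ and invoking $MS = SM^\pi$, exactly as you do. Your additional care about $S\bm{v}\neq\bm{0}$ (via the full column rank of $S$) and about multiplicities is a welcome tightening of details the paper leaves implicit, but it is not a different argument.
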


\begin{remark}
 Though many literature of graph spectra theory including \cite{Mieghem:2011:GSC:1983675}
 treat the notion of equitable partition on simple
 graphs, the properties of equitable partition including Equation
 \eqref{eq:equitable} and Lemma \ref{lemma:equitable} are also holds on
 multidigraphs without problems ({\it cf.} \cite{cvetkovic80}; Theorem 4.5).

 The concept of equitable partition was introduced in \cite{equitable}.
 Equitable partition have been considered in the literature also under
 the name \emph{divisor} and for more information the reader is referred
 to Chapter 4 in \cite{cvetkovic80}, where basic properties of divisor
 can be found. \ed
\end{remark}
\section{Minimal properties of minimal automata}\label{sec:min}
The ``if'' direction of the Theorem \ref{theorem:min} is obvious
from the rank-nullity theorem.
For proving the ``only if'' direction,
we prove the following two propositions.
\begin{enumerate}
 \item Quotient by an equitable partition always reduces the dimension,
	   rank and nullity, respectively (Proposition \ref{prop:equitable}).
 \item Nerode partition is equitable (Proposition \ref{prop:nerode}).
\end{enumerate}
Because, as we mentioned in \sec{nerode}, the adjacency matrix of
a minimal automaton equals to the quotient matrix of the adjacency
matrix of any equivalent automaton by its Nerode partition.

\begin{proposition}\label{prop:equitable}
 Let $\pi$ be an equitable partition of a matrix $M$ and $M^\pi$ be its
 induced quotient matrix, then the following inequalities hold.
\[
 \dim(M^\pi) \leq \dim(M), \;\;\; \rank(M^\pi) \leq \rank(M), \;\;\;
 \nullity(M^\pi) \leq \nullity(M). 
\]
\end{proposition}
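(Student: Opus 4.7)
The first inequality $\dim(M^\pi)\le\dim(M)$ is immediate, since the quotient is a $k\times k$ matrix while $M$ is $|N|\times|N|$ with $k\le|N|$. So the real content is the inequalities for rank and nullity, and my plan is to derive both directly from the equitable-partition identity $MS=SM^\pi$ together with the fact recalled in the preliminaries that the characteristic matrix $S$ has full column rank $k$.

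For the rank inequality I would argue as follows. Since $S$ has full column rank, it is injective as a linear map, so for any matrix $A$ of the appropriate shape one has $\ker(SA)=\ker(A)$, and hence $\rank(SA)=\rank(A)$ by rank--nullity. Applying this with $A=M^\pi$ and using $SM^\pi=MS$ gives
\[
\rank(M^\pi)=\rank(SM^\pi)=\rank(MS)\le\rank(M),
\]
where the last step is the standard fact that right-multiplication never increases rank.

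For the nullity inequality I would exhibit an explicit injection $\ker(M^\pi)\hookrightarrow\ker(M)$. If $\bm{v}\in\ker(M^\pi)$, then
\[
M(S\bm{v})=(MS)\bm{v}=(SM^\pi)\bm{v}=S(M^\pi\bm{v})=\bm{0},
\]
so $S\bm{v}\in\ker(M)$. The assignment $\bm{v}\mapsto S\bm{v}$ is linear, and it is injective because $S$ has full column rank (so $S\bm{v}=\bm{0}$ forces $\bm{v}=\bm{0}$). Thus $\nullity(M^\pi)\le\nullity(M)$.

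I do not expect a real obstacle here: everything reduces to the defining identity $MS=SM^\pi$ and the rank of $S$, both of which are recorded in Section~\ref{sec:equitalbe}. The only point that requires a little care is keeping track of the two different ambient dimensions ($k$ vs.\ $|N|$) when invoking rank--nullity; using $S$ as an embedding of the quotient side into the original side is what makes the comparison meaningful. Note also that, consistently with the dimension inequality, the two arguments together imply the stronger relation $\dim(M)-\dim(M^\pi)\ge(\rank(M)-\rank(M^\pi))+(\nullity(M)-\nullity(M^\pi))$, which will be useful when specialising to Nerode partitions in the proof of Theorem~\ref{theorem:min}.
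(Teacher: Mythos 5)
Your proof is correct and follows essentially the same route as the paper: both rest on the intertwining identity $MS=SM^\pi$ and the full column rank of $S$, with $\bm{v}\mapsto S\bm{v}$ giving the injection $\ker(M^\pi)\hookrightarrow\ker(M)$ for the nullity bound, and your rank step $\rank(M^\pi)=\rank(SM^\pi)=\rank(MS)\le\rank(M)$ being a tidier packaging of the paper's ``linearly independent non-kernel vectors map to linearly independent non-kernel vectors'' argument. (One small caveat: your closing ``stronger relation'' is actually an equality forced by applying rank--nullity to both matrices, so it carries no additional information.)
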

\begin{proof}
$\dim(M^\pi) \leq \dim(M)$ is obvious, then we prove the rest two inequalities.

Let $\bm{v}$ be a vector in the kernel of $M^\pi$ and $\bm{w}$ be a
 vector not in the kernel of $M^\pi$, then the following equations hold.
\begin{eqnarray}
 MS \bm{v} = S(M^\pi \bm{v}) = \bm{0},\label{eq:kernel}\\
 MS \bm{w} = S(M^\pi \bm{w}) \neq \bm{0} \label{eq:nonkernel}. 
\end{eqnarray}
Equation \eqref{eq:nonkernel} is induced by $M^\pi \bm{w} \neq \bm{0}$
 and $S\bm{u} \neq \bm{0}$ for any $\bm{u} \neq \bm{0}$ since $S$ has
 full rank.
Equation \eqref{eq:kernel} and \eqref{eq:nonkernel} leads:
\[
 \bm{v} \in \ker(M^\pi) \Rightarrow S\bm{v} \in \ker(M) \;\;\;
 \text{and} \;\;\;
 \bm{w} \not \in \ker(M^\pi) \Rightarrow S\bm{w} \not \in \ker(M).
\]
For any linearly independent vectors $\bm{u}$ and $\bm{u}'$ then $S\bm{u}$
 and $S\bm{u}'$ are also linearly independent since $S$ has full rank.
This shows the rest two inequalities.
\qed
\end{proof}

\begin{proposition}\label{prop:nerode}
 Nerode partition is equitable.
\end{proposition}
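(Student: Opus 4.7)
The plan is to verify the definition of equitability directly, via the standard closure property of Nerode equivalence under the transition function. Viewing $\CA$ as a multidigraph, the number of edges from a state $p$ into a block $C_j$ equals $|\{a \in A : \delta(p,a) \in C_j\}|$, so equitability of the Nerode partition amounts to showing that this cardinality depends only on which Nerode block contains $p$, not on the specific representative $p$ within that block.

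The key step is a small closure lemma: if $p$ and $q$ are Nerode equivalent (that is, $F(p)=F(q)$), then for every letter $a \in A$, the successors $\delta(p,a)$ and $\delta(q,a)$ are also Nerode equivalent. I would prove this by contraposition. If $\delta(p,a)$ and $\delta(q,a)$ were not Nerode equivalent, then some word $w$ would lie, say, in $F(\delta(p,a)) \setminus F(\delta(q,a))$; but then $aw$ would lie in $F(p) \setminus F(q)$, contradicting $F(p)=F(q)$. (This is nothing but the familiar right-congruence property of Nerode equivalence, recast in graph-theoretic language.)

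Once the closure lemma is in place, equitability follows at once. Fix any Nerode block $C_i$ and target block $C_j$, and pick $p,q \in C_i$. For each letter $a \in A$, the lemma puts $\delta(p,a)$ and $\delta(q,a)$ into the same Nerode block, so $\delta(p,a) \in C_j \iff \delta(q,a) \in C_j$. Hence $|\{a : \delta(p,a) \in C_j\}| = |\{a : \delta(q,a) \in C_j\}|$, which is exactly the statement that every row of the block $M_{i,j}$ has the same sum. Since this holds for all $i,j$, the partition is equitable.

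I do not foresee any real obstacle. The argument uses only the totality and determinism of $\delta$ (so every state has exactly $|A|$ outgoing edges) and the definition of $F(q)$; no appeal to minimality, accessibility, or trimness is needed. The only point worth noting is that the claim is genuinely about deterministic automata — for nondeterministic ones the analogous count would involve multiplicities, and the right-congruence argument above would not apply verbatim.
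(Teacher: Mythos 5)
Your proof is correct and rests on the same key fact as the paper's own argument: a word $w$ separating $F(\delta(p,a))$ from $F(\delta(q,a))$ yields the word $aw$ separating $F(p)$ from $F(q)$. The only difference is presentational --- you argue directly via the right-congruence closure lemma, while the paper runs the same implication as a proof by contradiction, starting from a discrepancy in the number of edges from $p$ and $q$ into some block $C_j$ and extracting a single witnessing letter $a$.
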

\begin{proof}
 Let $\CA = \langle Q, A, \delta, q_0, F \rangle$ be a
 deterministic automaton and its Nerode partition $\pi = (C_1, C_2, \ldots, C_k); C_i
 \subseteq Q$.
 We prove by contradiction.

 Assume $\pi$ is not equitable, then there exist $C_i$ and $C_j$ in $\pi$
 and $p$ and $q$ in $C_i$ such that $p$ and $q$ have different number of
 transition rules into $C_j$. We assume without loss of generality that
 the number of transition rules into $C_j$ of p is larger than $q$'s.
 Then there exists at least one alphabet $a$ in $A$ such that $\delta(p, a) \in
 C_j$ and $\delta(q, a) \notin C_j$. 
 Let $p_a = \delta(p, a)$ and $q_a = \delta(q, a)$,
 then $p_a$ and $q_a$ are not Nerode equivalent since
 $p_a$ belongs to another partition of $q_a$'s.
 Hence $F(p_a) \neq F(q_a)$ holds and either $F(p_a)$ or
 $F(q_a)$ is not empty.
 We assume without loss of generality that $F(p_a)$ is not empty.
 Because $F(p_a) \neq F(q_a)$ and $F(p_a) \neq
 \emptyset$, there exists $w$ in $F(p_a)$ such that $w \notin F(q_a)$
 and then $w$ satisfies:
\begin{eqnarray*}
 \delta(p, a w) = \delta(p_a, w) \in F \;\;\;\; \text{and} \;\;\;\; \delta(q, a w) = \delta(q_a, w) \notin F.
\end{eqnarray*}
 This leads that $p$ and $q$ are not Nerode
 equivalent even though $p$ and $q$ belong to the same Nerode equivalent class
 $C_i$. This is contradiction.
 \qed
\end{proof}

It is also proved that any deterministic automaton includes the spectrum
of its equivalent minimal automaton by Proposition \ref{prop:nerode} and
Lemma \ref{lemma:equitable}.
\section{Rank-one languages and expanded canonical automata}\label{sec:rankone}
In this section, we focus on rank-one languages and introduce expanded
canonical automata.
Firstly, we introduce the well-known general properties of rank-one matrices ({\it
cf}. Proposition 1 in \cite{Osnaga2005}).

\begin{property}[characterization of a rank one matrix]\label{fact:rankone}
 Let $M$, $n \geq 2$, be a $n$-dimensional real matrix of rank one. Then
\begin{enumerate}
 \item There exists $\bm{x}, \bm{y}$ vectors in $\mathbb{C}^n; \bm{x},\bm{y} \neq
	   \bm{0}$ such that $M = \bm{x}\bm{y}^T$;
 \item $M$ has at most one non-zero eigenvalue with algebraic
	   multiplicity 1;
 \item This eigenvalue is $\bm{y}^T \bm{x}$.\ed
\end{enumerate}
\end{property}

Property \ref{fact:rankone} shows that, for any rank-one language $L$, its
counting function can be represented as a monomial: $C_L(n) = \alpha
\lambda^n$ for $n > 0$ and natural numbers $\alpha$ and $\lambda$.
In addition, rank-one matrices have beneficial property that their power can
be computable in constant time with linear-time preprocessing.
Indeed, for any $m$-dimensional rank-one matrix $M$, there exists $x,y$
such that $x y^T = M$ hence, the following equation holds for $\lambda =
y^T x$:
\[
 M^n = (\bm{x}\bm{y}^T)^n = \bm{x} (\bm{y}^T \bm{x})^{n-1} \bm{y}^T =
 \lambda^{n-1} \bm{x}\bm{y}^T = \lambda^{n-1}M.
\]
This shows that $(M^n)_{ij}$ equals $\lambda^{n-1} (\bm{x}\bm{y}^T)_{ij}$,
 and the inner product of $\bm{x}$ and $\bm{y}$ has linear-time complexity
 with respect to its dimension $m$.

\subsection{In-vector and out-vector}
For any rank-one matrix $M$, we can construct $\bm{x}$ and $\bm{y}$ such that
$M = \bm{x}\bm{y}^T$ from the ratio of the number of incoming edges and
outgoing edges, respectively.

\begin{definition}\upshape
 Let M be an $n$-dimensional rank-one matrix.
 The \emph{in-vector} of $M$ is a non-zero row vector having minimum
 length in $M$ and is denoted by $\inv(M)$.
 Because $M$ is rank-one, each row vector $\bm{v}_i$ in $M$ can be
 represented as $\bm{v}_i = \alpha_i \cdot \inv(M)$ for some natural
 number $\alpha_i \geq 1$. 
 The \emph{out-vector} of $M$ is the $n$-dimensional column vector that
 has an $i$-th element defined as the above coefficient $\alpha_i$ and
 denoted by $\out(M)$.

 By this construction, it is clear that an in-vector and out-vector
 satisfy $\out(M) \cdot \inv(M) = M$. \ued
\end{definition}

In general, an automaton $\CA$ may have the state $q$ such that there are
no transition rules into $q$.
Hence, the in-vector must be taken from non-zero vectors in the given
matrix. We note that, for any rank-one matrix $M$, the out-vector of $M$
always contains one because it consists of the coefficients of the
in-vector of $M$ ({\it cf.} Example \ref{ex:zero}).

\begin{example}\label{ex:zero}
Consider the rank-one automaton shown in the adjacent figure.

\hspace{-1.5em}\begin{minipage}{0.7\hsize}\vspace{-1.7em}
This automaton is deterministic and trim.
Its adjacency matrix $M$ and in-vector $\inv(M)$ and out-vector $\out(M)$ are
 follows.
\[
 M = \begin{bmatrix}
  0 & 2 & 1\\
  0 & 4 & 2\\
  0 & 2 & 1
 \end{bmatrix}, \;\;\;
 \inv(M) = \begin{bmatrix}
  0, 2, 1
 \end{bmatrix},\;\;\;
 \out(M) =
 \begin{bmatrix}
  1\\
  2\\
  1
 \end{bmatrix}.
\]
\end{minipage}
\begin{minipage}{0.3\hsize}
\includegraphics[width=.85\columnwidth]{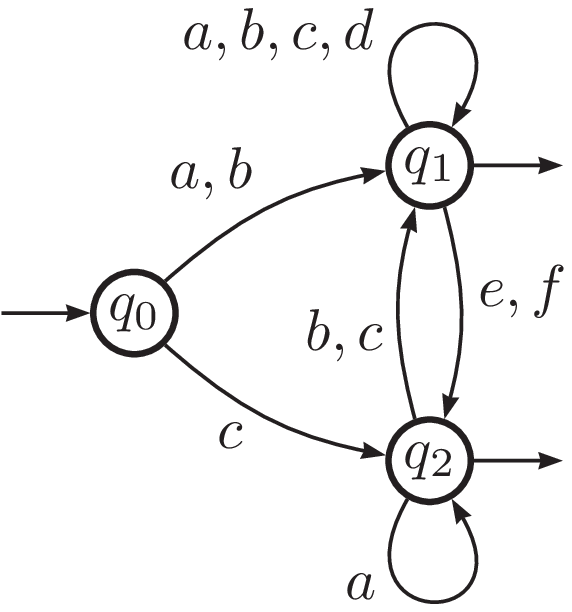}
\end{minipage}
Note that $\inv(M)_0 = 0$ means that $q_0$ has no incoming transition rules. 
\ed
\end{example}

\subsection{Expanded canonical automata}
First, we define a normal form of a rank-one automaton.

\begin{definition}\upshape\label{def:normlform}
A rank-one automaton $\CA$ is \emph{expanded normal} if for its
 adjacency matrix $M$, each element of the in-vector of $M$ equals to
 zero or one.
\ed
\end{definition}
The automaton in Figure \ref{ex:zero} is not expanded normal because the
second element of its in-vector equals two.
Expanded normal form is a graph normal form of automata, and does not
consider labels. 

Secondly, we propose the operation {\it expansion} that expands the
 given matrix (graph or automaton) algebraically.

\begin{definition}\upshape\label{def:expansion}
 Let ${}^\pi M$ and $M$ be two matrices of dimension $m$ and $n$,
 respectively.
 We define ${}^\pi M$ as an expansion of M if there exists a partition
 $\pi = \{ C_1, C_2, \ldots, C_n \}$ of ${}^\pi M$ such that the
 characteristic matrix $S$ of $\pi$ satisfies:
\[
 {}^\pi M = S M (S^T S)^{-1} S^T.
\]
\ued
\end{definition}
Expansion is an algebraic transformation that increases the dimension
of the given matrix. Intuitively, expansion can be regarded as an inverse
operation of quotient.
Indeed, for any expanded matrix ${}\pi M$ of some $n$-dimensional $M$ by
$\pi$ and its characteristic matrix $S$, we have the following equation:
\begin{eqnarray*}
 ({}^\pi M)^\pi = (S^T S)^{-1} S^T ({}^\pi M) S = (S^T S)^{-1} S^T \left(S M
  (S^T S)^{-1} S^T \right)S = M
\end{eqnarray*}
which holds by Equation \eqref{eq:quotient} and Definition
\ref{def:expansion}.
If $M$ is rank-one, then for any expanded matrix ${}^\pi M$ of
$M$, the out-vector of ${}^\pi M$ consists of same elements as those of
the out-vector of $M$. This reflects the invariance of the number of
outgoing transition rules of the Nerode equivalent states ({\it cf.}
Figure \ref{fig:twograph}).

\begin{figure}[h!]
\begin{minipage}[t]{0.3\hsize}
\centering\includegraphics[width=1\columnwidth]{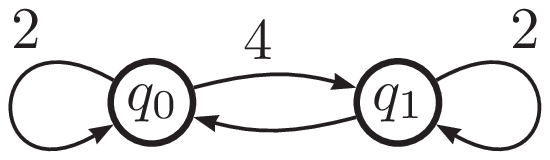}\\
{\large \centering{$\CD_1$}}
\end{minipage}
\begin{minipage}[t]{0.25\hsize}
\centering{ $\xrightarrow{\text{Expand by} \; \{\{q_0\}, \{q_1, q'_1\}\} \;\;
 }$}
\centering{ $\xleftarrow[\text{Quotient by} \; \{\{q_0\}, \{q_1, q'_1\}\}]{} $}
\end{minipage}
\begin{minipage}[t]{0.4\hsize}
\centering\includegraphics[width=.83\columnwidth]{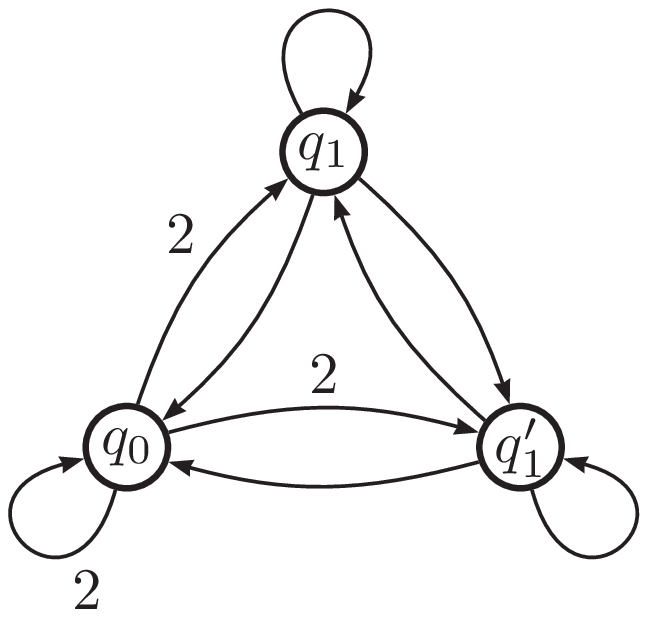}\\
{\large \centering{${}^\pi\CD_1$}}
\end{minipage}
\[
 \begin{bmatrix}
  2 & 4\\
  1 & 2
 \end{bmatrix} =
 \begin{bmatrix}
  2\\
  1
 \end{bmatrix}
 \begin{bmatrix}
  1 & 2
 \end{bmatrix}
 \;\;\;\;\;\;\;\;\;\;\;\;\;\;\;\;\;\;\;\;\;\;\;\;\;\;\;\;
 \;\;\;\;\;\;\;\;\;\;\;\;\;\;\;\;\;\;\;\;\;\;\;\;\;\;\;\;
 \begin{bmatrix}
  2 & 2 & 2\\
  1 & 1 & 1\\
  1 & 1 & 1
 \end{bmatrix} =
 \begin{bmatrix}
  2\\
  1\\
  1
 \end{bmatrix}
 \begin{bmatrix}
  1 & 1 & 1
 \end{bmatrix}
\]
 \caption{The rank-one graph $\CD_1$ and its expanded
 canonical form ${}^\pi\CD_1$.}
 \label{fig:twograph}
\end{figure}

Finally, we define a canonical automaton of a rank-one language:
\emph{expanded canonical automaton}.
The minimal automaton of a regular language $K$ is uniquely determined
by $K$, whereas the expanded canonical automaton of a
rank-one language $L$ is not uniquely determined, but its graph
structure is uniquely determined by $L$.

\begin{definition}\label{def:canonical}\upshape
Let $L$ be a rank-one language, then we define its expanded canonical
 automaton ${}^\pi \CA_L$ as the expanded automaton of the minimal automaton
 $\CA_L$ of $L$ by a partition $\pi = \{ C_1, C_2, \ldots C_{|\CA_L|}\}$
 such that, for all $C_i \in \pi$,  $|C_i| = \inv(\CA_L)_i$ if
 $\inv(\CA_L)_i \neq 0$ and 1 otherwise.
\ed
\end{definition}
By the definition, it is clear that for any rank-one language $L$,
its expanded canonical automaton is expanded normal ({\it cf.} Figure
\ref{fig:twograph}, or $\CA_1$ and its expanded canonical automaton
$\CC_1$ in Example \ref{fig:threedfa}).
As we describe in \sec{closure}, we introduce expanded canonical
automata  for analysis and evaluation of the closure properties of rank-one
languages.
Because of the limitations of space, a detailed discussion of expanded canonical
automata is not possible here.
\section{The way for further developments}\label{sec:discussion}
\subsection{Closure property of rank-one languages and
  decomposability}\label{sec:closure}
It is natural to consider the closure properties of rank-one languages.
However,  with some exceptions ({\it e.g.}
\emph{quotient}, \emph{prefix}\footnote{The operations that can be
realised without destroying graph structure of a deterministic
automaton}), the class of lank-one languages is, for the most part, not
closed under an operation on languages: {\it e.g.}
\emph{union}, \emph{concatenation} and \emph{Kleene star}.
Indeed, for the two rank-one (expanded canonical) automaton $\CE_1$ and
$\CF_1$ in Figure \ref{fig:twodfa}, the union of $L(\CE_1)$ and $L(\CF_1)$
has the spectrum $\{3, 2, -1, -1, -1\}$ (without zeros) and is
rank-five. We note that $L(\CE_1)$ and $L(\CF_1)$ have the same prefix, and
the minimal automaton of $L(\CE_1) \cup L(\CF_1)$ is strongly connected
and has nine states. In addition, note that there exist rank-one
languages $K and L$ such that the union language $K \cup L$ has
irrational and complex eigenvalues.

\begin{figure}[h!]
\begin{minipage}[t]{0.5\hsize}
\centering\includegraphics[width=.65\columnwidth]{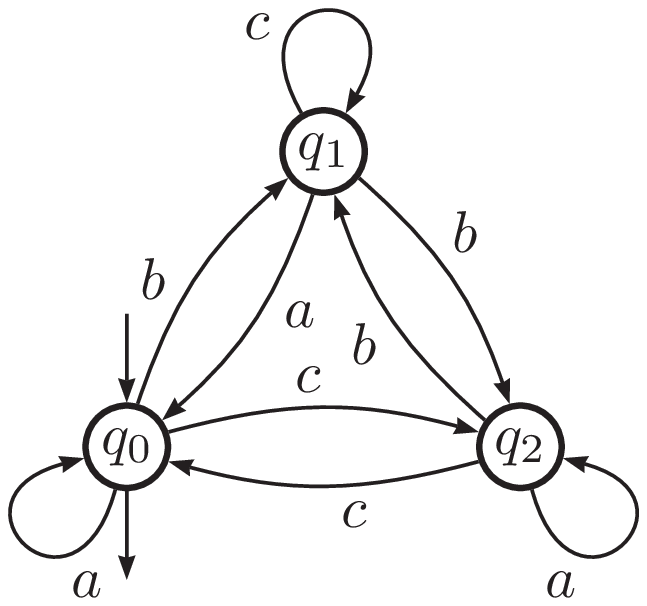}
{\large \centering{$\CE_1$}}
\end{minipage}
\begin{minipage}[t]{0.5\hsize}
\centering\includegraphics[width=.65\columnwidth]{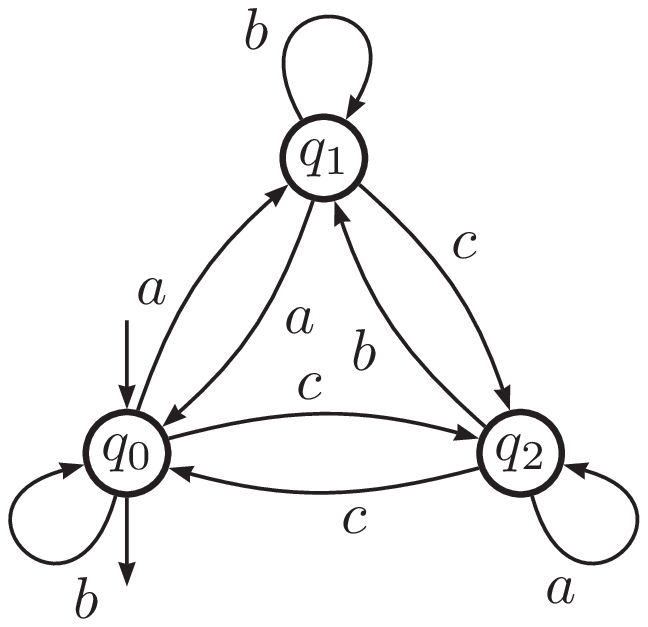}
{\large \centering{$\CF_1$}}
\end{minipage}
 \caption{Two rank-one automata (expanded canonical automata).}
 \label{fig:twodfa}
\end{figure}

Conversely, we consider the closure of rank-one languages with an
operation on languages or decomposability into rank-one languages
(\emph{rank-one decomposition}).
In the case of matrices, matrix rank-one decomposition is well studied
and there exist fundamental results such as \emph{orthogonal
decomposition} for real symmetric matrices.
We are interested in investigating regular language rank-one
decomposition.
 
\subsection{Rank of unambiguous automata}
The class of \emph{unambiguous automata} is a more general class of automata than the
class of deterministic automata ({\it cf.} \cite{Sakarovitch:2009:EAT:1629683}).
We intend to generalise Definition \ref{def:rank-n} as Definition
\ref{def:urank-n} which is more essential for the counting structure of
languages because unambiguous automata is the most general class that
satisfies Equality \eqref{eq:counting}.
It will be interesting to determine whether the rank of a minimal unambiguous
automaton is minimal. If so, we can refine Definition \ref{def:urank-n}
in a similar manner.

\begin{definition}\upshape\label{def:urank-n}
 A regular language $L$ is {\it unambiguous rank-$n$} if there exists a
 rank-$n$ unambiguous automaton recognises $L$ and does not exist a
 rank-$m$ unambiguous automaton recognises $L$ for any $m$ less than
 $n$.
\ed
\end{definition}

\subsection{Relation between the conjugacy of automata}
B\'eal {\it et al.} developed the theory of \emph{conjugacy of automata}
({\it cf.}
\cite{Beal:2005:EZA:2104063.2104105,Beal:2006:CEW:2094874.2094885}) that
gives structural information on two equivalent $\mathbb{K}$-automata.
Conjugacy of automata is a theory based on matrices, and we think some
results in this paper may be reconstructed by the theory of conjugacy.\\

\newpage
\noindent{\bf Acknowledgement}
I would like to thank my adviser, Kazuyuki Shudo, for his continuous
support and encouragement. Special thanks also go to Yuya Uezato, is 
a postgraduate student at University of Tsukuba, who provided carefully
considered feedback and valuable comments.

\bibliographystyle{splncs.bst}
\bibliography{ref}

\begin{thebibliography}{10}

\bibitem{Shur:2008:CCR:1813695.1813728}
Shur, A.M.:
\newblock Combinatorial complexity of regular languages.
\newblock In: Proceedings of the 3rd International Conference on Computer
  Science: Theory and Applications. CSR'08, Berlin, Heidelberg, Springer-Verlag
  (2008)  289--301

\bibitem{journals/corr/abs-1010-5456}
Shur, A.M.:
\newblock Combinatorial characterization of formal languages.
\newblock CoRR \textbf{abs/1010.5456} (2010)

\bibitem{RyomaSIN'YA2013}
Sin'ya, R.:
\newblock Text compression using abstract numeration system on a regular
  language.
\newblock Computer Software \textbf{30}(3) (2013)  163--179 in Japanese,
  English extended abstract is available at
  \url{http://arxiv.org/abs/1308.0267}.

\bibitem{Goldberg:1985:CR:22145.22194}
Goldberg, A., Sipser, M.:
\newblock Compression and ranking.
\newblock In: Proceedings of the Seventeenth Annual ACM Symposium on Theory of
  Computing. STOC '85, New York, NY, USA, ACM (1985)  440--448

\bibitem{bernerslee2005uri}
Berners-Lee, T., Fielding, R., Masinter, L.:
\newblock Rfc 3986, uniform resource identifier (uri): Generic syntax (2005)

\bibitem{RANS}
Sin'ya, R.:
\newblock Rans : More advanced usage of regular expressions.
\newblock \url{http://sinya8282.github.io/RANS/}

\bibitem{journals/mst/ChoffrutG95a}
Choffrut, C., Goldwurm, M.:
\newblock Rational transductions and complexity of counting problems.
\newblock Mathematical Systems Theory \textbf{28}(5) (1995)  437--450

\bibitem{Berth:2010:CAN:1941063}
Lecomte, P., Rigo, M.:
\newblock Combinatorics, Automata and Number Theory. 1st edn.
\newblock Cambridge University Press, New York, NY, USA (2010) Chapter 3:
  Abstract numeration systems.

\bibitem{Sakarovitch:2009:EAT:1629683}
Sakarovitch, J.:
\newblock Elements of Automata Theory.
\newblock Cambridge University Press, New York, NY, USA (2009)

\bibitem{Mieghem:2011:GSC:1983675}
Mieghem, P.V.:
\newblock Graph Spectra for Complex Networks.
\newblock Cambridge University Press, New York, NY, USA (2011)

\bibitem{nerode58}
Nerode, A.:
\newblock Linear automaton transformations.
\newblock Proceedings of the American Mathematical Society \textbf{9}(4) (1958)
   541--544

\bibitem{IGMA_BeaCro08}
B\'eal, M.P., Crochemore, M.:
\newblock Minimizing incomplete automata.
\newblock In: Finite-State Methods and Natural Language Processing (FSMNLP'08).
  Joint Research Center (2008)  9--16

\bibitem{cvetkovic80}
Cvetkovi\'{c}, D.M., Doob, M., Sachs, H.:
\newblock Spectra of Graphs: Theory and Application.
\newblock Academic Press, New York (1980)

\bibitem{equitable}
Schwenk, A.:
\newblock Computing the characteristic polynomial of a graph.
\newblock In Bari, R., Harary, F., eds.: Graphs and Combinatorics. Volume 406
  of Lecture Notes in Mathematics.
\newblock Springer Berlin Heidelberg (1974)  153--172

\bibitem{Osnaga2005}
Osnaga, S.M.:
\newblock On rank one matrices and invariant subspaces.
\newblock Balkan Journal of Geometry and its Applications (BJGA) \textbf{10}(1)
  (2005)  145--148

\bibitem{Beal:2005:EZA:2104063.2104105}
B{\'e}al, M.P., Lombardy, S., Sakarovitch, J.:
\newblock On the equivalence of z-automata.
\newblock In: Proceedings of the 32Nd International Conference on Automata,
  Languages and Programming. ICALP'05, Berlin, Heidelberg, Springer-Verlag
  (2005)  397--409

\bibitem{Beal:2006:CEW:2094874.2094885}
B{\'e}al, M.P., Lombardy, S., Sakarovitch, J.:
\newblock Conjugacy and equivalence of weighted automata and functional
  transducers.
\newblock In: Proceedings of the First International Computer Science
  Conference on Theory and Applications. CSR'06, Berlin, Heidelberg,
  Springer-Verlag (2006)  58--69

\end{thebibliography}
\end{document}